\newcommand{\fixme}[1]{}
\title{Achieving Fully Proportional Representation is Easy in Practice}
\newtheorem{theorem}{Theorem}
\newtheorem{definition}{Definition}
\author{
Piotr Skowron\\
       {University of Warsaw}\\
       {Warsaw, Poland}\\
\and
Piotr Faliszewski\\
       {AGH University}\\
       {Krakow, Poland}\\
\and
Arkadii Slinko\\
       {University of Auckland}\\
       {Auckland, New Zealand}\\
}
\newcommand{\CC}{{{\mathrm{CC}}}}
\newcommand{\Monroe}{{{\mathrm{Monroe}}}}
\newcommand{\score}{{{\mathrm{score}}}}
\newcommand{\bests}{{{\mathrm{bests}}}}
\newcommand{\pos}{{{{\mathrm{pos}}}}}
\newcommand{\borda}{{{{\mathrm{Borda}}}}}
\newcommand{\w}{{{{\mathrm{W}}}}}
\newcommand{\opt}{{{\mathrm{opt}}}}
\newcommand{\ideal}{{{\mathrm{ideal}}}}
\newcommand{\np}{{\mathrm{NP}}}
\newcommand{\naturals}{{{\mathbb{N}}}}
\begin{document}
\maketitle

\begin{abstract}
  We provide experimental evaluation of a number of known and new
  algorithms for approximate computation of Monroe's and
  Chamberlin-Courant's rules. Our experiments, conducted both on
  real-life preference-aggregation data and on synthetic data, show
  that even very simple and fast algorithms can in many cases find
  near-perfect solutions. Our results confirm and complement very
  recent theoretical analysis of Skowron et al., who have shown good
  lower bounds on the quality of (some of) the algorithms that we
  study.
\end{abstract}

\section{Introduction}\label{sec::introduction}

Many countries are governed using indirect democracy, where the people
do not make decisions directly, but rather select representatives
(e.g., a parliament, a senate, a congress) who rule in their
interest. Unfortunately, relatively little effort was so far invested
in the algorithmic study of procedures for electing committees of
representatives (few exceptions include
papers~\cite{mei-pro-ros-zoh:j:multiwinner,pro-ros-zoh:j:proportional-representation,budgetSocialChoice,fullyProportionalRepr}).
Here, we consider two particularly appealing rules for electing
a set of representatives, namely those of Monroe and of Chamberlin and Courant, and
we argue that while these rules in the worst case scenario may be difficult to compute~\cite{pro-ros-zoh:j:proportional-representation,budgetSocialChoice},
in practice, very simple and efficient algorithms find almost-perfect
approximate results.

There are several ways in which countries can choose their parliaments
(or, more generally, in which societies can choose committees of
representatives).  Often, voters are divided into districts and in
each district we hold a local election. For the case of
single-representative districts, in each district we have a
single-winner election held according to one of the standard,
well-known, rules such as the Plurality rule or Borda's rule. In
particular, if the Plurality rule is used then this system is known as
First-Past-the-Post (FPP): In each district the candidate supported by
the largest number of voters is elected.  However, FPP has a number
of drawbacks. For example, it is possible that in a country with two
major parties, $A$ and $B$, even if $49\%$ of the citizens support
party $B$, only members of party $A$ enter the parliament (this
happens if in each district party $A$ has a slight advantage over
party $B$).  Indeed, under FPP the election organizers are
particularly tempted to tamper with the partition of voters into
districts. To circumvent this problem one might use
multi-representative districts, where elections are held using some
multi-winner voting rule (e.g., using Single Transferable Vote
(STV),\footnote{STV for more than one winner is sometimes referred to
  as ``Alternative Vote'' (AV). In a recent referendum Great Britain
  rejected AV as a method for choosing its members of parliament.} or
using a voting rule that assigns scores to the candidates and picks a
group of those with highest scores). However, this approach only
partially solves the problem. Further, compared to
single-representative districts, multi-representative districts loosen
the connection between the candidates and the voters that have elected them.

Fortunately, there is a very attractive way to avoid the problems
mentioned above: Instead of using fixed districts, we may partition
the voters dynamically, based on the votes that they cast. Indeed,
this is exactly the idea behind the fully proportional representation
rules of Monroe~\cite{mon:j:monroe} and of Chamberlin and
Courant~\cite{cha-cou:j:cc}. If we seek a parliament of $K$
representatives, then Monroe's rule says that we should pick a set of
$K$ candidates for whom there is an assignment of these candidates to
the voters such that:
(a) each candidate is assigned to roughly the same number of voters,
and
(b) the total satisfaction of the voters (measured in one of the ways
introduced later) is maximal. Chamberlin-Courant's rule is similar
except that it allows each selected candidate to be matched to a
different number of voters. (Thus if one were to elect a parliament
using Chamberlin-Courant's rule then one should use weighted voting
within the parliament, weighted by the number of voters matched to
each representative.)

In the above description we focus on political elections, but we
mention that both Monroe's rule and Chamberlin-Courant's rule have
many different applications as well. For example, Skowron et
al.~\cite{sko-fal-sli:w:multiwinner} have presented several
(multi-agent) resource allocation settings that can be modeled using
these rules and Lu and Boutilier~\cite{budgetSocialChoice} have
proposed to use Chamberlin-Courant's rule for constructing
recommendations for groups of agents.

Unfortunately, computing Monroe's and Chamberlin-Courant's rules is
both
$\np$-hard~\cite{pro-ros-zoh:j:proportional-representation,budgetSocialChoice}
and difficult in the parametrized sense~\cite{fullyProportionalRepr}.
Thus using them in practice might simply be impossible. The goal of
this paper is to show that not all is lost. We provide experimental
evaluation of a number of known and new algorithms for approximate
computation of Monroe's and Chamberlin-Courant's rules. Our
experiments, conducted both on real-life preference-aggregation data
and on synthetic data, show that even very simple and fast algorithms
can in many cases find near-perfect solutions. Our results confirm and
complement very recent theoretical analysis of Skowron et
al.~\cite{sko-fal-sli:w:multiwinner}, who have shown good lower bounds
on the quality of (some of) our algorithms.
While for single-winner rules using approximate algorithms may be
debatable, for the case of electing a large body of representatives,
e.g., a parliament, using approximation
algorithms seems far easier to justify. Indeed, a good approximate
solution for Monroe's or Chamberlin-Courant's rule represents the
society almost as well as a perfect solution would.

The paper is organized as follows. In Section~\ref{sec:prelim} we 
formally define Monroe's and Chamberlin-Courant's rules. In
Section~\ref{sec:algorithms} we give an overview of the algorithms
that we evaluate and in Section~\ref{sec:DataSets} we describe the
data sets that we use in our experiments. Section~\ref{sec:experiments}
contains our main results. We conclude in Section~\ref{sec:conclusions}.

\section{Preliminaries}\label{sec:prelim}
In this section we briefly review basic notions regarding social
choice theory and we define Monroe's~\cite{mon:j:monroe} and
Chamberlin and Courant's~\cite{cha-cou:j:cc} proportional
representation systems.  We assume the reader is familiar with
standard notions regarding algorithms. For each positive integer $n$,
by $[n]$ we mean the set $\{1, \ldots, n\}$.\medskip

\noindent\textbf{Elections.}
We consider elections over a given set $A = \{a_1, \ldots, a_m\}$ of
alternatives. We have a set $N = [n]$ of agents (the voters), where
each voter $i$, $1 \leq i \leq n$, has a preference order $\succ_i$
over $A$.  A preference order of an agent $i$ is a linear order over
the set $A$; the maximal element is this agent's most preferred
alternative, the minimal element is this agent's least preferred
alternative, and the alternatives in the middle represent the agent's
spectrum of preference. We refer to the collection $V = (\succ_1,
\ldots, \succ_n)$ as the preference profile for a given election.

Let us fix an agent $i$, $1 \leq i \leq n$, and an alternative $a \in
A$. By $\pos_i(a)$ we mean the position $a$ has in $i$'s preference
order. If $a$ is $i$'s most preferred candidate then $\pos_i(a) = 1$,
and if $a$ is $i$'s least preferred candidate then $\pos_i(a) =
\|A\| = m$.\medskip

\noindent\textbf{Positional Scoring Functions.}
Let $m$ be the number of candidates in eleciton. A \emph{positional
  scoring function (PSF)} is any function $\alpha \colon [m]
\rightarrow \naturals$ that satisfies the following two conditions:
(a) $\alpha(m) = 0$, and (b) for each $i, j$, $1 \leq i < j \leq m$,
$\alpha(i) \geq \alpha(j)$.  In Monroe's and in Chamberlin-Courant's
proportional representation rules we will match agents to the
alternatives that represent them.  Intuitively, $\alpha(i)$ is the
amount of satsifaction that an agent derives from being represented by
an alternative that this agent ranks on the $i$'th position.  In this
paper we focus on Borda count PSF, which for $m$ alternatives is
defined as $\alpha^m_{\borda}(i) = m - i$. However, occasionally we
will consider other PSFs as well.

In our algorithms we assume that the PSF $\alpha$ to be used is given
explcitly, as a vector $(\alpha_1, \ldots, \alpha_m)$ of integers such
that for each $i$, $1 \leq i \leq m$, $\alpha(i) = \alpha_i$.
We will implcitly assume that the number of alternatives matches
the domain of the given PSF.
\medskip

\noindent\textbf{Proportional Representation.} Let $A = \{a_1, \ldots, a_m\}$ be
the set of alternatives and $N = [n]$ be the set of agents (with
preference orders over $A$). A representation function is any function
$\Phi \colon N \rightarrow A$. For an $m$-candidate PSF $\alpha$ and a
representation function $\Phi$, $\Phi$'s satisfaction is defined as:
\[
\alpha(\Phi) = \sum_{i=1}^{n} \alpha( \pos_i(\Phi(i))).
\]
Let $K$ be a positive integer. A $K$-$\CC$-representation function is any
representation function $\Phi$ such that $\|\Phi^{-1}(N)\| \leq K$ (that
is, any representation function that matches voters to at most $K$
alternatives). A $K$-$\Monroe$-representation function $\Phi$ is any
$K$-$\CC$-representation function that additionally satisfies the
following requirement: For each $a \in A$ it holds that either
$\lfloor \frac{n}{K} \rfloor \leq \|\Phi^{-1}(a)\| \leq \lceil
\frac{n}{K} \rceil$ or $\|\Phi^{-1}(a)\| = 0$ (that is, each
alternative represents either roughly $\frac{n}{K}$ agents or none of
them).

We will also consider partial representation functions. A partial
$\CC$-representation function is defined in the same way as a regular
one, except that it may assign a null alternative, $\bot$, to some of
the agents.  By convention, we take that for each agent $i$ we have
$\pos_i(\bot) = m$.  A partial $\Monroe$-representation function is
defined analogously: It may assign the null alternative to some voters
(there are no constraints on the number of agents to whom the null
alternative is assigned) but it must be possible to extend it to a
regular $\Monroe$-representation function by replacing the occurrences
of the null alternative with the real ones.

We now define Monroe's and Chamberlin-Courant's (CC) rules.
\begin{definition}
  Let $R$ be a member of $\{\Monroe, \CC\}$. Let $A = \{a_1, \ldots,
  a_m\}$ be a set of alternatives, $N = [n]$ be a set of agents, and
  $\alpha$ be an $m$-candidate PSF. Let $K$ be the size of the set of
  representatives that we seek ($K \leq m$). We say that a $K$-element
  set $W$, $W \subseteq A$, is a set of $\alpha$-$R$ winners if there
  exists a $K$-$R$-representation function $\Phi: N \rightarrow W$
  such that for every other $K$-$R$-representation function $\Psi$ it
  holds that $\alpha(\Phi) \geq \alpha(\Psi)$.
\end{definition}

We point out that for both Monroe's and Chamberlin-Courant's rule
there may be several different winner sets and that some form of
tie-breaking should be applied in these settings. Here we disregard
tie-breaking and simply are interested in \emph{some} winner set (and,
not being able to compute that, in any set with as high a satsifaction as
possible).

It is well-known that for many natural families of PSFs, both for
Monroe's rule and for Chamberlin-Courant's rule, it is $\np$-complete
to decide if there exists a winner set that achieves a given
satisfaction~\cite{pro-ros-zoh:j:proportional-representation,budgetSocialChoice,fullyProportionalRepr}.
However, for each $R$ in $\{\Monroe, \CC\}$, for each PSF $\alpha$
(with the domain matching the number of alternatives in the election),
and for each set $S$ of up to $K$ alternatives we can compute in
polynomial time a (possibly partial) $K$-$R$-representation function
$\Phi^S_R$ that maximizes the agent satisfaction under the condition
that agents are matched to the alternatives in $S$ only. Indeed, it is
easy to see that for $\alpha$-$\CC$ this function is:
\begin{align*}
\Phi^S_{\CC}(i) = \mathrm{argmin}_{a \in S} \pos_i(a)
\end{align*}
and that it is never a partial representation function.  For the case
of $\alpha$-Monroe, computing $\Phi^S_\Monroe$ is more involved and
requires solving a certain min-cost/max-flow problem (see the work of
Betzler et al.~\cite{fullyProportionalRepr}; here if $\|S\| < K$ then
$\Phi_\Monroe^S$ is a partial $\Monroe$-representation function). One
can see that for a given set $S$, there may be many different
(partial) $K$-$\Monroe$-representation functions that achieve optimal satisfaction;
when we write $\Phi_\Monroe^S$, we mean, w.l.o.g., the particular one
computed by the algorithm of Betzler et
al.~\cite{fullyProportionalRepr}.

\section{Algorithms}\label{sec:algorithms}

Let us now describe the algorithms that we will consider in this work.
Some of our algorihtms can be applied both to Monroe's rule and to
Chamberlin-Courtant's rule, while some are specific to only one of
them. For each algorithm we will exactly specify for which rules it is
applicable and, if it is applicable to both, what are the differences.

While most of the algorithms described below are based on ones already
given in the literature, in a number of cases we added heuristics on
top of existing algorithms (which proved to be quite effective, as we
will see later) and, in one case, provided a completely new theortical
analysis. For each algorithm we will carefully describe what was
already known in the literature, and which additions are due to this
paper.

Throughout this section we assume we are given the following setting.
$A = \{a_1, \ldots, a_m\}$ is a set of alternatives, $\alpha$ is an
$m$-candidate PSF, $N = [n]$ is a set of agents, each with a
preference order over $A$, and $K$ is a positive integer, $K \leq m$
(the size of the committee we want to elect).

\subsection{ILP Formulation (Monroe and CC)}
To measure the quality of our approximation algorithms, we compare
their results against optimal solutions that we obtain using integer
linear programs (ILPs) that describe Monroe's and Chamberlin-Courant's
rules. An ILP for Chamberlin-Courant's rule, for arbitraty PSF $\alpha$,
was provided by Lu and Boutilier~\cite{budgetSocialChoice};
the analogous formulation for Monroe's rule was provided by Potthoff and Brams~\cite{potthoff-brams}.
We used the GLPK 4.47 package (GNU Linear Programming Kit, version 4.47)
to solve these ILPs, whenever it was possible to do so in reasonable
time.

\subsection{Algorithms A, B, and C (Monroe)}
Skowron et al.~\cite{sko-fal-sli:w:multiwinner} have suggested and
studied the following algorithm for Monroe's rule, which we will call
Algorithm~A. We start with an empty partial $\Monroe$-representation
function $\Phi$ and we execute $K$ iterations. In each iteration we
do the following:
\newcounter{Lcount}
\begin{list}{\arabic{Lcount}.}{\setlength{\leftmargin}{8pt} \setlength{\labelwidth}{0pt}}
        \setlength{\itemsep}{2pt}
\addtocounter{Lcount}{1}
\item For each alternative $a \in A$ that does not yet represent any
  agents, we compute the maximal satisfaction that some
  not-yet-represented $\lceil \frac{n}{K} \rceil$ agents derive from
  being represented by $a$ (we call this number $\score(a)$ and we refer
  to these agents as $\bests(a)$).
\addtocounter{Lcount}{1}
\item We pick an alternative $a$ with maximum $\score(a)$ and extend
  $\Phi$ by assigning $a$ to represent agents in $\bests(a)$.
\end{list}
This algorithm clearly works in polynomial time. Skowron et
al.~\cite{sko-fal-sli:w:multiwinner} have shown that for
$\alpha^m_{\borda}$ it finds a solution whose satisfaction is at least
a $(1 - \frac{K-1}{2(m-1)} - \frac{H_K}{K})$ fraction of a (possibly
nonexistent) perfect solution, where each agent is represented by his
or her top preference ($H_K$ is the $K$'th harmonic number, i.e., $H_K
= \sum_{i=1}\frac{1}{i} = \Theta(\log K)$).  This suggests that the
algorithm performs best in elections where the size of the committee
we seek is relatively small with respect to the number of
alternatives.

Based on Algorithm~A we have derived Algorithm~B. The only difference
is that after completing the operation of Algorithm~A, we take the set
$S$ of alternatives that were assigned to represent some agents by
Algorithm~A, and replace function $\Phi$ with function
$\Phi_\Monroe^S$, that optimally reassigns the alternatives to the
voters. This very simple heuristic turned out to noticeably improve
the results of the algorithm in practice (and, of course, the
approximation guarantees carry over from Algorithm~A to
Algorithm~B).

Algorithm~C is a further heuristic improvement over Algorithm~B.  This
time the idea is that instead of keeping only one partial function
$\Phi$, we keep a list of up to $d$ partial representation functions,
where $d$ is a parameter of the algorithm. At each iteration, given
these $d$ partial representation functions, for each $\Phi$ of them
and for each alternative $a$ that does not yet have agents assigned to
by this $\Phi$, we compute an optimal extension of this $\Phi$ that
assigns agents to $a$. As a result we obtain possibly more than $d$
(partial) representation functions. For the next iteration we keep
those $d$ of them that have highest satisfaction.

We provide pseudocode for Algorithm~C in
Figure~\ref{alg:greedyImpr}. If we take $d=1$, we obtain
Algorithm~B. If we also disregard the last two lines prior to
returning solution, we obtain Algorithm~A.

\SetKwInput{KwNotation}{Notation}
\NoCaptionOfAlgo
\begin{algorithm}[t]
   \small
   \SetAlCapFnt{\small}
\KwNotation{$\Phi \leftarrow$ a map defining a (partial) representation function, iteratively built by the algorithm. \\
          $\hspace{37pt}$ $\Phi^{\leftarrow} \leftarrow$ the set of agents already represented by some 
           alternative \\
          $\hspace{37pt}$ $\Phi^{\rightarrow} \leftarrow$ the set of alternatives already used in the          
          representation function.\\
          $\hspace{37pt}$ $Par$ $\leftarrow$ a list of partial representation functions}
\vspace{5pt}
   $Par = []$ \\
   $Par$.push($\{\}$) \\
   
   \For{$i\leftarrow 1$ \KwTo $K$}{
      $newPar = []$ \\
      \For{$\Phi \in Par$}{
          $score \leftarrow \{\}$ \\
          $bests \leftarrow \{\}$ \\
          \ForEach{$a_{i} \in A \setminus \Phi^{\rightarrow}$}{
              $agents \leftarrow$ sort $N \setminus \Phi^{\leftarrow}$ so that $j \prec k$ in $agents$ \\
                               $\hspace{37pt}$ $\implies$ $pos_{j}(a_{i}) \leq pos_{k}(a_{i})$ \\
              $bests[a_{i}] \leftarrow$ chose first $\lceil \frac{N}{K} \rceil$ elements of $agents$ \\
              $\Phi' \leftarrow \Phi$ \\
              \ForEach{$j \in bests[a_{i}]$}{
                 $\Phi'[j] \leftarrow a_{i}$ \\
              }
              $newPar$.push($\Phi'$) \\
          }
          sort $newPar$ according to descending order of the total satisfaction of the assigned agents  \\
          $Par \leftarrow$ chose first $d$ elements of $newPar$ \\
      }
   }
   \For{$\Phi \in Par$}{
      $\Phi \leftarrow$ compute the optimal representative function using an algorithm of Betzler et al.~\cite{fullyProportionalRepr} for the set of winners $\Phi^{\rightarrow}$
   }
   \Return{the best representative function from $Par$}
   \label{alg:greedyImpr}
	\caption{\small \textbf{Figure~\ref{alg:greedyImpr}:} The pseudocode for Algorithm~C.}
\end{algorithm}

\subsection{Algorithm GM (Monroe and CC)}
Algorithm~GM (greedy marginal improvement) was introduced by Lu and
Boutilier for the case of Chamberlin-Courant's rule.  Here we
generalize it to apply to Monroe's rule as well, and we show that it
is a $1-\frac{1}{e}$ approximation algorithm for $\alpha$-Monroe.  We
point out that this is the first approximation result for Monroe rule
that applies to all PSFs $\alpha$ (approximability results of Lu and
Boutilier~\cite{budgetSocialChoice} did not apply to $\alpha$-Monroe,
and results of Skowron et al.~\cite{sko-fal-sli:w:multiwinner} applied
to Monroe with Borda count PSF only).  For the case of Monroe, the
algorithm can also be viewed as an extension of Algorithm~B.

Let $R$ be one of $\Monroe$ and $\CC$.  The algorithm proceeds as
follows. We start with an emtpy set $S$.  Then we execute $K$
iterations. In each iteration we find an alternative $a$ that is not
assigned to agents yet, and maximizes the value $\Phi_R^{S \cup
  \{a\}}$. (A certain disadvantage of this algorithm for the case of
Monroe is that it requires a large number of computations of
$\Phi_\Monroe^S$, which is a slow process based on min-cost/max-flow
algorithm.) We provide the pseudocode for Algorithm~GM in
Figure~\ref{alg:greedyOptImpr}.

\begin{algorithm}[t]
  \small \SetAlCapFnt{\small} \KwNotation{ $R$ is either $\Monroe$ or
    $\CC$.}
  \vspace{5pt}
  $S \leftarrow \emptyset$ \\
  \For{$i\leftarrow 1$ \KwTo $K$}{
    $a \leftarrow \mathrm{argmax}_{a \in A \setminus S} \alpha(\Phi_R^{S \cup \{a\}})$ \\
    $S \leftarrow S \cup \{a\}$ \\
  } \Return{$\Phi_{M}^{S}$}
   \label{alg:greedyOptImpr}
   \caption{\small \textbf{Figure~\ref{alg:greedyOptImpr}:} Pseudocode for Algorithm~GM.}
\end{algorithm}

\begin{theorem}
  Algorithm~GM is an $(1 - 1/e)$-approximation algorithm for the
  Monroe'e election problem for arbitrary positional scoring functions.
\end{theorem}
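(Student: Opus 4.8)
The plan is to recognize Algorithm~GM as the classical greedy algorithm for maximizing a monotone submodular set function under a cardinality constraint, and then to invoke the Nemhauser--Wolsey--Fisher guarantee. Define the set function $f(S) = \alpha(\Phi_\Monroe^S)$, the optimal $\Monroe$ satisfaction obtainable when agents may be matched only to alternatives in $S$ (extended to $\|S\| > K$ by allowing any $K$ of the alternatives in $S$ to be used). With this definition, each iteration of Algorithm~GM picks the $a \in A \setminus S$ maximizing $f(S \cup \{a\})$, i.e.\ maximizing the marginal gain, so Algorithm~GM is exactly greedy on $f$; moreover the value of an optimal $\Monroe$ committee is $\opt = \max_{\|W\| \le K} f(W)$. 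Hence it suffices to prove that $f$ satisfies $f(\emptyset) = 0$, is monotone nondecreasing, and is submodular. Granting these, the standard telescoping argument gives, at each step $i$,
\[
 \opt - f(S_i) \le \Bigl(1 - \tfrac{1}{K}\Bigr)\bigl(\opt - f(S_{i-1})\bigr),
\]
and unrolling over the $K$ iterations yields $f(S_K) \ge \bigl(1 - (1-\tfrac1K)^K\bigr)\opt \ge (1 - \tfrac1e)\opt$.

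The two easy properties come first. We have $f(\emptyset)=0$ because with no alternatives every agent receives $\bot$ and $\alpha(\pos_i(\bot)) = \alpha(m) = 0$. Monotonicity is also immediate: if $S \subseteq T$, then any (partial) $\Monroe$-representation function with image in $S$ is also one with image in $T$, so $f(S) \le f(T)$; intuitively, more available alternatives can only help, since unused capacity can always be left as $\bot$.

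The crux is submodularity: for $S \subseteq T$ and $a \notin T$ we must show $f(S \cup \{a\}) - f(S) \ge f(T \cup \{a\}) - f(T)$, and by a standard reduction it suffices to treat $T = S \cup \{b\}$, i.e.\ to prove $f(S \cup \{a\}) + f(S \cup \{b\}) \ge f(S) + f(S \cup \{a,b\})$. The natural approach is an exchange argument on the underlying assignments: recall (Betzler et al.~\cite{fullyProportionalRepr}) that $\Phi_\Monroe^S$ is computed as an optimal min-cost/max-flow, so $f$ is the value of a capacitated transportation problem in which opening an alternative contributes a ``supply'' of $\approx n/K$ and each agent is a unit ``demand.'' Taking optimal flows realizing $f(S)$ and $f(S \cup \{a,b\})$, I would superimpose them, decompose their difference into alternating paths and cycles, and split these augmentations between the two intermediate sets $S \cup \{a\}$ and $S \cup \{b\}$ so that the two resulting feasible flows have total value at least $f(S) + f(S \cup \{a,b\})$.

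The hard part is precisely the feature distinguishing $\Monroe$ from $\CC$: the balance constraint forcing each opened alternative to represent roughly the same number $n/K$ of agents, rather than a freely chosen number. Because of this lower bound on the load, opening a new alternative forces a whole block of $\approx n/K$ agents to be reassigned, and this reshuffling can cascade through the already-opened alternatives; one must show that the net value of such a cascade cannot grow when the ground set is enlarged. Two further points need care: the $\lfloor n/K \rfloor$ versus $\lceil n/K \rceil$ rounding, which makes the per-alternative loads not perfectly uniform, and the all-or-nothing nature of opening an alternative together with the ``at most $K$'' cap, which is what requires extending $f$ and verifying submodularity on sets larger than $K$ (the telescoping above evaluates $f$ on sets of size up to $2K$). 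If a fully general exchange argument proves unwieldy, an alternative is to charge directly, mimicking the submodular proof: for an optimal committee $W^* = \{w_1,\dots,w_K\}$ with agent blocks $B_1,\dots,B_K$, one lower-bounds the greedy marginal at step $i$ by $\max_j$ of the gain from opening $w_j$ and serving its block, and shows these gains sum to at least $\opt - f(S_{i-1})$.
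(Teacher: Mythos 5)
Your overall framework matches the paper's: both reduce the theorem to showing that $z(S)=\alpha(\Phi_\Monroe^S)$ is monotone and submodular, and then invoke the Nemhauser--Wolsey--Fisher guarantee for greedy maximization. The gap is that the submodularity claim --- which you yourself identify as the crux --- is never actually proven. You describe a plan (superimpose optimal flows for $S$ and $S\cup\{a,b\}$, decompose the difference into alternating paths and cycles, split the augmentations between $S\cup\{a\}$ and $S\cup\{b\}$) and then immediately flag the obstruction: the Monroe balance constraint means opening an alternative forces a block of roughly $n/K$ agents to be reassigned, and the resulting cascade through already-opened alternatives must be controlled. You do not control it, and your fallback (charging directly against the blocks of an optimal committee) is likewise only named, not executed. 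As it stands, the proposal is a correct reduction plus an honest admission that the key lemma remains open, so the theorem is not established.

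The paper closes exactly this gap, and the way it does so dissolves your main worry. It generalizes from sets of alternatives to arbitrary capacity functions $s\colon A\to\naturals$ and defines $z_{N'}(s)$ as the optimal satisfaction of a sub-population $N'$ subject only to the \emph{upper} bounds $\|(\Phi_{\Monroe}^{s})^{-1}(a)\|\le s(a)$; the lower bound on loads that drives your feared cascade simply does not appear in this relaxation (which is harmless, since satisfactions are nonnegative and optimal partial solutions fill capacity anyway). It then proves the stronger statement that $s\le t$ implies $z_{N}(s\cup\{a\})-z_{N}(s)\ge z_{N}(t\cup\{a\})-z_{N}(t)$ by induction on the agent set $N$: pick an agent $i$ with $\Phi_{\Monroe}^{t\cup\{a\}}(i)=a$, peel off $i$, and combine the optimality of $\Phi_{\Monroe}^{t}$ with the inductive hypothesis on $N\setminus\{i\}$. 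This handles the reshuffling one agent at a time rather than one block at a time, and because capacity functions are unconstrained it also covers your (correct) observation that the telescoping argument evaluates the function on sets of size up to $2K$. To complete your proof you would need either to adopt such a relaxation-plus-induction, or to actually carry out the flow-decomposition exchange in the presence of the balance constraints; neither is done in the proposal.
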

\begin{proof}
  The proof is based on the powerful result of Nemhauser et
  al.~\cite{submodular}, who have shown that greedy algorithms achieve
  $1-\frac{1}{e}$ approximation ratio when used to optimize submodular
  functions. Let $A$ be a set of alternatives, $N = [n]$ a set of
  agents with preferences over $A$, $\alpha$ an $\|A\|$-candidate PSF,
  and $K \leq \|A\|$ the number of representatives that we want to
  elect.

  We consider function $z: 2^{A} \rightarrow \naturals$ defined, for
  each set $S$, $S \subseteq A$ and $\|S\| \leq K$, as $z(S) =
  \alpha(\Phi_{\Monroe}^{S})$.  Clearly, $z(S)$ is monotonic (that is, for
  each two sets $A$ and $B$, if $A \subseteq B$ and $\|B\| \leq K$
  then $z(A) \leq z(B)$. The main part of the proof below is to show
  that $z$ is submodular (we provide the definition below).

  Since $\mathrm{argmax}_{S \subset A, \|S\| = K} z(S)$ is the set of
  winners of our election (under $\alpha$-Monroe) and since
  Algorithm~GM builds the solution iteratively by greedily extending
  initially empty set $S$ so that each iteration increases the value
  of $z(S)$ maximally, by the results of Nemhauser et
  al.~\cite{submodular} we get that Algorithm~GM is a
  $(1-\frac{1}{e})$-approximation algorithm.

  Let us now prove that $z$ is submodular.  That is, our goal is to
  show that for each two sets $S$ and $T$, $S \subset T$, and each
  alternative $a \notin T$ it holds that $z(S \cup \{a\}) - z(S) \geq
  z(T \cup \{a\}) - z(T)$. First, we introduce a notion that
  generalizes the notion of a partial set of winners $S$. Let $s: A
  \rightarrow \naturals$ denote a function that assigns a capacity to
  each alternative (i.e., $s$ gives a bound on the number of agents
  that a given alternative can represent). Intuitively, each set $S$,
  $S \subseteq A$, corresponds to the capacity function that assigns
  $\lceil \frac{n}{k} \rceil$ to each alternative $a \in S$ and 0 to
  each $a \notin S$.  Given a capacity function $s$, we define a
  partial solution $\Phi_{\Monroe}^{s}$ to be one that maximizes the
  total satisfaction of the agents and that satisfies the capacity
  constraints: $\forall_{a \in S} \|(\Phi_{\Monroe}^{s})^{-1}(a)\|
  \leq s(a)$. To simplify notation, we write $s \cup \{a\}$ to denote
  the function such that $(s \cup \{a\})(a) = s(a) + 1$ and
  $\forall_{a' \in S} (s \cup \{a\})(a') = s(a')$. (Analogously, we
  interpret $s \setminus \{a\}$ as subtracting one from the capacity
  for $a$; provided it is nonzero.)  Also, by $s \leq t$ we mean that
  $\forall_{a \in A} s(a) \leq t(a)$. We extend our function $z$ to
  allow us to consider a subset of the agents only.  For each subset
  $N'$ of the agents and each capacity function $s$, we define
  $z_{N'}(s)$ to be the satisfaction of the agents in $N'$ obtained
  under $\Phi_{\Monroe}^{s}$. We will now prove a stronger variant of
  submodularity for our extended $z$. That is, we will show that for
  each two capacity functions $s$ and $t$ it holds that:
  \begin{equation}\label{eq:submodularity}
    s \leq t \Rightarrow z_{N}(s \cup \{a\}) -  z_{N}(s) \geq z_{N}(t \cup \{a\}) -  z_{N}(t)
  \end{equation}
  Our proof is by induction on $N$.  Clearly,
  Equation~\eqref{eq:submodularity} holds for $N' = \emptyset$. Now,
  assuming that Equation~\eqref{eq:submodularity} holds for every $N'
  \subset N$ we will prove its correctness for $N$.  Let $i$ denote an
  agent such that $\Phi_{\Monroe}^{t \cup \{a\}}(i) = a$ (if there is
  no such agent then clearly the equation holds). Let $a_{s} =
  \Phi_{\Monroe}^{s}(i)$ and $a_{t} = \Phi_{\Monroe}^{t}(i)$. We have:
  \begin{align*}
    z_{N}(t \cup \{a\}) -  z_{N}(t) =   \alpha(\pos_i(a)) +  z_{N \setminus \{i\}}(t)   
     - \alpha(\pos_i(a_{t})) - z_{N \setminus \{i\}}(t \setminus \{a_{t}\}). 
  \end{align*}
  We also have:
  \begin{align*}
    z_{N}(s \cup \{a\}) -  z_{N}(s)  \geq \alpha(\pos_i(a)) + z_{N \setminus \{i\}}(s)  
     - \alpha(\pos_i(a_{s})) - z_{N \setminus \{i\}}(s \setminus \{a_{s}\}).
  \end{align*}
  Since $\Phi_{\Monroe}^{t}$ describes an optimal representation function under
  the capacity restrictions $t$, we have that:
  \begin{align*}
    \alpha(\pos_i(a_t)) + z_{N \setminus \{i\}}(t \setminus a_{t})
    \geq \alpha(\pos_i(a_s)) + z_{N \setminus \{i\}}(t \setminus \{a_{s}\})
  \end{align*}
  Finally, from the inductive hypothesis for $N' = N \setminus \{i\}$ we have:
  \begin{align*}
    z_{N \setminus \{i\}}(s) - z_{N \setminus \{i\}}(s \setminus \{a_{s}\}) \geq
    z_{N \setminus \{i\}}(t) - z_{N \setminus \{i\}}(t \setminus \{a_{s}\})
  \end{align*}
  By combining these inequalities we get:
  \begin{align*}
    z_{N}(s \cup \{a\}) -  z_{N}(s) & \geq \alpha(\pos_i(a)) + z_{N \setminus \{i\}}(s) 
     - (\alpha(\pos_i(a_{s})) + z_{N \setminus \{i\}}(s \setminus \{a_{s}\})) \\
    & \geq \alpha(\pos_i(a)) - \alpha(\pos_i(a_{s})) 
     +  z_{N \setminus \{i\}}(t) - z_{N \setminus \{i\}}(t \setminus \{a_{s}\}) \\
    & \geq \alpha(\pos_i(a)) + z_{N \setminus \{i\}}(t) 
     - \alpha(\pos_i(a_t)) -  z_{N \setminus \{i\}}(t \setminus \{a_{t}\}) \\
    & = z_{N}(t \cup \{a\}) - z_{N}(t)
  \end{align*}
  This completes the proof.
\end{proof}

\subsection{Algorithm C (CC)}

This algorithm, introduced in this paper, proceeds like Algorithm GM
for Chamberlin-Courant's rule, but in each iteration it keeps up to
$d$ (partial) CC-representation functions $\Phi^S_\CC$, for distinct
subsets $S$ of alternatives ($d$ is a parameter of the algorithm). In
each iteration the algorithm extends each function $\Phi^S_\CC$ by
every possible alternative (obtaining $O(dm)$ new representation
functions) and stores up to $d$ of them, that obtain highest
satisfaction.

\subsection{Algorithm P (CC)}
Algorithm~P (position restriction) was introduced and studied by
Skowron et al.~\cite{sko-fal-sli:w:multiwinner}. The algorithm
proceeds as follows. First, we consider a certain number $x$
(specifically, $x = \lceil \frac{m \w(K)}{K}\rceil$, where $\w(x)$ is
Lambert's $\w$ function, defined as the solution of equality $x =
\w(x)e^{\w(x)}$). Then, the algorithm tries to greedily find a cover
of as many agents as possible with $K$ alternatives (an alternative is
said to cover a given agent if this agent ranks this alternative among
top $x$ positions).  Skowron et al.~\cite{sko-fal-sli:w:multiwinner}
have shown that for $\alpha^m_{\borda}$ this algorithm finds a
solution that is at most $1 - \frac{2\w(K)}{K}$ times worse than a
perfect (possibly nonexistent) solution, where every agent is
represented by his or her top-preferred alternative. The pseudocode
for Algorithm~P is presented in Figure~\ref{alg:posConCC}.

\begin{algorithm}[t]
   \small
   \SetAlCapFnt{\small}
   \KwNotation{We use the same notation as in Algorithm~\ref{alg:greedyImpr}\\
   $\hspace{37pt}$ $\mathrm{num\_pos}_x(a) \leftarrow \|\{i \in [n] \setminus \Phi^{\leftarrow} : pos_i(a) \leq x \}\|$ \\ $\hspace{37pt}$ (the number of not-yet assigned agents that rank alternative $a$ in one of their first $x$ positions)}
\vspace{5pt}
   $\Phi = \{\}$ \\
   $x = \lceil \frac{m\w(K)}{K} \rceil$ \\
   \For{$i\leftarrow 1$ \KwTo $K$}{
                        $a_{i} \leftarrow \mathrm{argmax}_{a \in A \setminus \Phi^{\rightarrow}} \mathrm{num\_pos}_x(a)$

      \ForEach{$j \in [n] \setminus \Phi^{\leftarrow}$}{
         \If{$pos_j(a_{i}) < x$}{
             $\Phi[j] \leftarrow a_{i}$ \\
         }
      }
   }
   \ForEach{$j \in A \setminus \Phi^{\leftarrow}$}{
       $a \leftarrow$ such server from $\Phi^{\rightarrow}$ that $\forall_{a' \in \Phi^{\rightarrow}} pos_{j}(a) \leq pos_{j}(a')$ \\
       $\Phi[j] \leftarrow a$ \\
   }
   \label{alg:posConCC}
   \caption{\small \textbf{Figure~\ref{alg:posConCC}:} Pseudocode for Algorithm~P.}
\end{algorithm}
\addtocounter{figure}{\value{algocf}}

\subsection{Algorithm R (Monroe and CC)}

Algorithm~R (random sampling) is based on picking the set of winners
randomly and matching them optimally to the agents.  Skowron et
al.~\cite{sko-fal-sli:w:multiwinner} have shown that if one chooses a
set $S$ of $K$ alternatives uniformly at random, then for
$\alpha^m_\borda$-Monroe, the expected satisfaction of
$\alpha^m_\borda(\Phi^S_\Monroe)$ is $\frac{1}{2}(1 + \frac{K}{m} -
\frac{K^2}{m^2-m} + \frac{K^3}{m^3-m^2}) - \epsilon$, and that one has
to repeat this process $\frac{-512 \log (1 - \lambda)}{K\epsilon^2}$
times, to reach probability $\lambda$ of achieving this satisfaction.
For example, for $\lambda=0.99$ and $\epsilon=0.1$ this algorithm
would require to repeat the sampling process $340000/K$ times (each
time executing a costly matching algorithm).  This makes the algorithm
impractical, especially for small instances (where $K$ is low). 
Thus in our experimental evaluation we will consider the
modification of the algorithm that repeats the sampling process only
100 times.

Oren~\cite{ore:p:cc} has shown an analogous result for the
case of Chamberlin-Courant's rule.

\subsection{Summary of the Algorithms}
We summarize the algorithms that we use in Table~\ref{tab:algs}.  In
particular, the table clearly shows that for the case of Monroe,
Algorithms B and C are not much slower than Algorithm A but offer a
chance of improved peformance. Algorithm GM is intuitively even more
appealing, but achieves this at the cost of high time complexity. For
the case of Chamberlin-Courant's rule, it is unclear which of the
algorithms to expect to be superior. One of the main goals of this
paper is to establish if either of the presented algorithms clearly
dominates the others. Our implementations are available at
\url{http://mimuw.edu.pl/~ps219737/monroe/experiments.tar.gz}.

\begin{table}[t]

\begin{center}
  \setlength{\tabcolsep}{3pt}
  \begin{tabular}{|c|l|l|c|}
  \hline
  Algorithm & Approximation ratio for Borda PSF & Runtime & Reference \\
  \hline
  A    & $1 - \frac{K-1}{2(m-1)} - \frac{H_K}{K}$ & $Kmn$ & Skowron et al.~\cite{sko-fal-sli:w:multiwinner}  \\
  B    & as in Alg. A & $Kmn$$+$$O(\Phi^{S})$ & (this paper)\\
  C    & as in Alg. A & $dKmn$$+$$dO(\Phi^{S})$ & (this paper)\\
  GM   & as in Alg. A for Borda PSF; $1-\frac{1}{e}$ for others   & $KmO(\Phi^{S})$ & (this paper) \\
  R    &  $\frac{1}{2}(1 + \frac{K}{m} - \frac{K^2m - K^{3}}{m^3-m^2})$ & $\frac{|\log (1 -\lambda)|}{K\epsilon^2}O(\Phi^{S})$ & Skowron et al.~\cite{sko-fal-sli:w:multiwinner} \\
  \hline
  P    & $1 - \frac{2\w(K)}{K}$ & $nm\w(K)$ & Skowron et al.~\cite{sko-fal-sli:w:multiwinner} \\
  GM   & $1-\frac{1}{e}$ & $Kmn$ & Lu and Boutilier~\cite{budgetSocialChoice} \\
  C    & as in Alg. GM & $dKm(n$$+$$\log dm)$ & (this paper)  \\
  R    & $(1-\frac{1}{K+1})(1 + \frac{1}{m})$ & $\frac{|\log (1 -\lambda)|}{\epsilon^2}n$ & Oren~\cite{ore:p:cc} \\
  \hline  
  \end{tabular}

  \caption{\label{tab:algs}A summary of the algorithms studied in this
    paper. The top of the table presents algorithms for Monroe's rule
    and the bottom for Chamberlin-Courant's rule. In column
    ``Approx.'' we give currently known approximation ratio for the
    algorithm under Borda PSF, on profiles with {$m$}
    candidates and where the goal is to select a committee of size
    {$K$}. Here, {$O(\Phi^{S}) = O(n^2(K +
      \mathrm{log}n))$} is the complexity of finding a partial
    representation function with the algorithm of Betzler et
    al.~\cite{fullyProportionalRepr}.}
\end{center}
\end{table}

\section{Experimental Data}\label{sec:DataSets}
We have considered both real-life preference-aggregation data and
synthetic data, generated according to a number of election models.

\subsection{Real-Life Data}\label{sec:real-data}

We have used real-life data regarding people's preference on sushi types,
movies, college courses, and competitors' performance in
figure-skating competitions.
One of the major problems regarding real-life preference data is that
either people express preferences over a very limited set of
alternatives, or their preference orders are partial. To address the
latter issue, for each such data set we complemented the partial
orders to be total orders using the technique of
Kamishima~\cite{Kamishima:Nantonac}.  (The idea is to complete each
preference order based on those reported preference orders that appear
to be similar.)

Some of our data sets contain a single profile, whereas the others
contain multiple profiles.  When preparing data for a given number $m$
of candidates and a given number $n$ of voters from a given data set,
we used the following method: We first uniformly at random chose a
profile within the data set, and then we randomly selected $n$ voters
and $m$ candidates. We used preference orders of these $n$ voters
restricted to these $m$ candidates.

\smallskip



\noindent\textbf{Sushi Preferneces.}\quad
We used the set of preferences regarding sushi types collected by
Kamishima\cite{Kamishima:Nantonac}.\footnote{The sushi data set is
  available under the following url:
  \url{http://www.kamishima.net/sushi/}}  Kamishima has collected two
sets of preferences, which we call \textsc{S1} and \textsc{S2}. Data
set S1 contains complete rankings of $10$ alternatives collected from
$5000$ voters.  S2 contains partial rankings of $5000$ voters over a
set of $100$ alternatives (each vote ranks $10$ alternatives). We used
Kamishima~\cite{Kamishima:Nantonac} technique to obtain total
rankings.\smallskip

\noindent\textbf{Movie Preferences.}\quad
Following Mattei et al.~\cite{Mattei:Netflix}, we have used the NetFlix data
set\footnote{http://www.netflixprize.com/} of movie preferences (we
call it \textsc{Mv}).  NetFlix data set contains ratings collected
from about $480$ thousand distinct users regarding $18$ thousand
movies. The users rated movies by giving them a score between $1$
(bad) and $5$ (good). The set contains about $100$ million ratings.
We have generated $50$ profiles using the following method: For each
profile we have randomly selected $300$ movies, picked $10000$ users
that ranked the highest number of the selected movies, and for each
user we have extended his or her ratings to a complete preference
order using the method of
Kamishima~\cite{Kamishima:Nantonac}.\smallskip

\noindent\textbf{Course Preferences.}\quad
Each year the students at the AGH University choose
courses that they would like to attend. The students are offered a
choice of six courses of which they have to attend three.  Thus the
students are asked to give an unordered set of their three
top-preferred courses and a ranking of the remaining ones (in case too
many students select a course, those with the highest GPA are enrolled
and the remaining ones are moved to their less-preferred courses). In
this data set, which we call \textsc{Cr}, we have $120$ voters
(students) and $6$ alternatives (courses). However, due to the nature
of the data, instead of using Borda count PSF as the satisfaction measure,
we have used the vector $(3,3,3,2,1,0)$. We made this data set
publicly available under the url: \url{http://mimuw.edu.pl/~ps219737/monroe/registration.tar.gz}. \smallskip

\noindent\textbf{Figure Skating.}\quad This data set, which we call \textsc{Sk}, contains
preferences of the judges over the performances in a figure-skating
competitions. The data set contains $48$ profiles, each describing a
single competition. Each profile contains preference orders of $9$
judges over about 20 participants. The competitions include European
skating championships, Olympics, World Junior, and World
Championships, all from 1998\footnote{This data set is available under
  the following url: \url{http://rangevoting.org/SkateData1998.txt}.}.
(Note that while in figure skating judges provide numerical scores,
this data set is preprocessed to contain preference orders.)

\subsection{Synthetic Data}

For our tests, we have also used profiles generated using three
well-known distributions of preference orders. \smallskip

\noindent\textbf{Impartial Culture.}\quad Under impartial culture
model of preferences (which we denote \textsc{IC}), for a given set
$A$ of alternatives, each voter's preference order is drawn uniformly
at random from the set of all possible total orders over $A$.  While
not very realistic, profiles generated using impartial culture model
are a standard testbed of election-related algorithms.  \smallskip

\noindent\textbf{Polya-Eggenberger Urn Model.}\quad Following Walsh~\cite{Walsh11},
we have used the Polya-Eggenberger urn model~\cite{bpublicchoice85} (which
we denote \textsc{Ur}). In this model we generate votes as follows. We
have a set $A$ of $m$ alternatives and an urn that initially contains
all $m!$ preference orders over $A$.  To generate a vote, we simply
randomly pick one from the urn (this is our generated vote), and
then---to simulate correlation between voters---we return $a$ copies
of this vote to the urn. When generating an election with $m$ candidates
using the urn model, we have set the parameter $a$ so that
$\frac{a}{m!} = 0.05$ (Walsh~\cite{Walsh11} calls this parameter $b$;
we mention that other authors use much higher values of $b$ but we felt
that too high a value of $b$ leads to a much too strong a correlation
between votes).\smallskip

\noindent\textbf{Generalized Mallow's Model.}\quad We refer to this
data set as \textsc{Ml}. Let $\succ$ and $\succ'$ be two preference
orders over some alternative set $A$. Kendal-Tau distance between
$\succ$ and $\succ'$, denoted $d_{K}(\succ,\succ')$, is defined as the
number of pairs of candidates $x, y \in A$ such that either $x \succ
y \land y \succ' x$ or $y \succ x \land x \succ' y$.

Under Mallow's distribution of preferences~\cite{mallowModel} we are
given two parameters: A \emph{center} preference order $\succ$ and a
number $\phi$ between $0$ and $1$. The model says that the probability
of generating preference order $\succ'$ is proportional to the value
$\phi^{d_{K}(\succ,\succ')}$.  To generate preference orders following
Mallow's distribution, we use the algorithm given by Lu and Boutilier
\cite{mallowImplementation2011}.

In our experiments, we have used a mixture of Mallow's models.  Let
$A$ be a set of alternatives and let $a$ be a positive integer. This
mixture model is parametrized by three vectors, $\Lambda =
(\lambda_{1}, \dots, \lambda_{a})$ (where each $\lambda_i$, $1 \leq i
\leq a$ is between $0$ and $1$, and $\sum_{i=1}^a\lambda_1=1$), $\Phi
= (\phi_{1}, \dots, \phi_{a})$ (where each $\phi_i$, $1 \leq i \leq
a$, is a number between $0$ and $1$), and $\Pi = (\succ_{1}, \ldots,
\succ_{a})$ (where each $\succ_i$, $1 \leq i \leq a$, is a preference
order over $A$). To generate a vote, we pick a random integer $i$, $1
\leq i \leq a$ (each $i$ is chosen with probability $\lambda_i$), and
then generate the vote using Mallow's model with parameters
$(\succ_i,\phi_i)$.

For our experiments we have used $a = 5$, and we have generated
vectors $\Lambda$, $\Phi$, and $\Pi$ uniformly at random.

\section{Experiments}\label{sec:experiments}

In this section we present the results of the evaluation of algorithms
from Section~\ref{sec:algorithms} on the data sets from
Section~\ref{sec:DataSets}. In all cases, except for the college
courses data set, we have used Borda PSF to measure voter
satisfaction.  For the case of the courses data set, we have used
vector $(3,3,3,2,1,0)$.

We have conducted three sets of experiments. First, we have tested all
our algorithms on relatively small elections (up to $10$ candidates,
up to $100$ agents). In this case we were able to compare our
algorithms' solutions with the optimal ones. (To obtain the optimal
solutions we were using the ILP formulations and GLPK's ILP solver.)
Thus we report the quality of our algorithms as the average of
fractions ${C}/{C_{\opt}}$, where $C$ is the satisfaction obtained by
a respective algorithm and $C_{\opt}$ is the satisfaction in the
optimal solution.
For each algorithm and data set, we also report the average fraction
${C}/{C_{\ideal}}$, where $C_{\ideal}$ is the satisfaction that the
voters would have obtained if each of them were matched to his or her
most preferred alternative. In our further experiments, where we
consider larger elections, we were not able to compute optimal
solutions, but fraction ${C}/{C_{\ideal}}$ gives a lower bound for
${C}/{C_{\opt}}$.  We report this value for small elections so that
we see an example of relation between ${C}/{C_{\opt}}$ and
${C}/{C_{\ideal}}$ and so that we can compare the results for small
elections with the results for the larger ones.  Further, for the case
of Borda PSF the ${C}/{C_\ideal}$ fraction has a very natural
interpretation: If its value is $\alpha$ (for a given solution), then,
on the average, in this solution each voter is matched to an
alternative that he or she prefers to $(m-1)\alpha$ alternatives.

In our second set of experiments we have run our algorithms on large
elections (thousands of agents, hundreds of alternatives), coming
either from the NetFlix data set or generated by us using one of our
models. Here we reported the average fraction ${C}/{C_{\ideal}}$ only.
We have analyzed the quality of the solutions as a function of the
number of agents, the number of candidates, and the relative number of
winners (fraction $K/m$). (This last set of results is particularly
interesting because in addition to measuring the quality of our
algorithms, it allows one to asses the size of a committee one should
seek if a given average agent satisfaction is to be obtained).

In the third set of experiments we have measured running times of our
algorithms and of the ILP solver.

\begin{table}[t]
\begin{center}

\begin{tabular}{|c|c|c|c|c|c||c|c|c|c|}
\cline{2-10}
\multicolumn{1}{c|}{} & \multicolumn{5}{|c||}{Monroe} & \multicolumn{4}{|c|}{CC} \\
\cline{2-10}
\multicolumn{1}{c|}{} & A & B & C & GM & R    & C & GM & P & R\\
\cline{1-10}
\textsc{S1} & $0.94$ & $0.99$        & $\approx 1.0$  & $0.99$         & $0.99$ & $1.0$ & $\approx 1.0$ & $0.99$ & $0.99$ \\
\textsc{S2} & $0.95$ & $0.99$        & $1.0$          & $\approx 1.0$  & $0.99$ & $1.0$ & $\approx 1.0$ & $0.98$ & $0.99$ \\
\textsc{Mv} & $0.96$ & $\approx 1.0$ & $1.0$          & $\approx 1.0$  & $0.98$ & $1.0$ & $\approx 1.0$ & $0.96$ & $\approx 1.0$ \\
\textsc{Cr} & $0.98$ & $0.99$        & $1.0$          & $\approx 1.0$  & $0.99$ & $1.0$ & $\approx 1.0$ & $1.0$  & $\approx 1.0$ \\
\textsc{Sk} & $0.99$ & $\approx 1.0$ & $1.0$          & $\approx 1.0$  & $0.94$ & $1.0$ & $\approx 1.0$ & $0.85$ & $0.99$ \\
\cline{1-10}
\textsc{IC} & $0.94$ & $0.99$        & $\approx 1.0$  & $0.99$         & $0.99$ & $1.0$ & $\approx 1.0$ & $0.99$ & $0.99$\\
\textsc{Ml} & $0.94$ & $0.99$        & $1.0$          & $0.99$         & $0.99$ & $1.0$ & $\approx 1.0$ & $0.99$ & $0.99$ \\
\textsc{Ur} & $0.95$ & $0.99$        & $\approx 1.0$  & $0.99$         & $0.99$ & $1.0$ & $0.99$        & $0.97$ & $0.99$ \\
\cline{1-10}
\end{tabular}
\caption{The average quality of the  algorithms compared with the optimal solution ({$C/C_{\opt}$}) for the small instances of data and for {$K=3$}.}
\label{table:qualityAlgs1}
\end{center}
\end{table}

\begin{table}[t]
\begin{center}
\begin{tabular}{|c|c|c|c|c|c||c|c|c|c|}
\cline{2-10}
\multicolumn{1}{c|}{} & \multicolumn{5}{|c||}{Monroe} & \multicolumn{4}{|c|}{CC} \\
\cline{2-10}
\multicolumn{1}{c|}{} & A & B & C & GM & R    & C & GM & P & R\\
\cline{1-10}
\textsc{S1} & $0.95$ & $\approx 1.0$ & $1.0$          & $0.99$         & $0.99$        & $1.0$ & $\approx 1.0$ & $0.97$ & $0.99$ \\
\textsc{S2} & $0.94$ & $0.99$        & $\approx 1.0$  & $0.99$         & $0.99$        & $1.0$ & $\approx 1.0$ & $0.98$ & $\approx 1.0$ \\
\textsc{Mv} & $0.95$ & $0.99$        & $1.0$          & $\approx 1.0$  & $0.98$        & $1.0$ & $\approx 1.0$ & $0.97$ & $\approx 1.0$ \\
\textsc{Cr} & $0.96$ & $\approx 1.0$ & $1.0$          & $\approx 1.0$  & $0.99$        & $1.0$ & $1.0$         & $1.0$  & $1.00$ \\
\textsc{Sk} & $0.99$ & $\approx 1.0$ & $1.0$          & $\approx 1.0$  & $0.88$        & $1.0$ & $1.0$         & $0.91$ & $\approx 1.0$ \\
\cline{1-10}
\textsc{IC} & $0.95$ & $0.99$        & $\approx 1.0$  & $0.99$         & $0.99$        & $1.0$ & $\approx 1.0$ & $0.99$ & $0.99$\\
\textsc{Ml} & $0.95$ & $0.99$        & $\approx 1.0$  & $0.99$         & $0.99$        & $1.0$ & $\approx 1.0$ & $0.98$ & $0.99$ \\
\textsc{Ur} & $0.96$ & $0.99$        & $\approx 1.0$  & $0.99$         & $\approx 1.0$ & $1.0$ & $\approx 1.0$ & $0.96$ & $0.99$ \\
\cline{1-10}

\end{tabular}

\caption{The average quality of the  algorithms compared with the optimal solution ({$C/C_{\opt}$}) for the small instances of data and for {$K=6$}.}
\label{table:qualityAlgs2}
\end{center}
\end{table}

\begin{table}[t]
\begin{center}
\begin{tabular}{|c|c|c|c|c|c||c|c|c|c|}
\cline{2-10}
\multicolumn{1}{c|}{} & \multicolumn{5}{|c||}{Monroe} & \multicolumn{4}{|c|}{CC} \\
\cline{2-10}
\multicolumn{1}{c|}{} & A & B & C & GM & R    & C & GM & P & R\\
\cline{1-10}
\textsc{S1} & $0.85$ & $0.89$ & $0.9$   & $0.89$ & $0.89$ & $0.92$ & $0.89$ & $0.91$ & $0.92$ \\
\textsc{S2} & $0.85$ & $0.89$ & $0.89$  & $0.89$ & $0.89$ & $0.93$ & $0.9$  & $0.91$ & $0.92$ \\
\textsc{Mv} & $0.88$ & $0.92$ & $0.92$  & $0.92$ & $0.91$ & $0.97$ & $0.92$ & $0.93$ & $0.97$ \\
\textsc{Cr} & $0.94$ & $0.97$ & $0.96$  & $0.96$ & $0.96$ & $0.97$ & $0.97$ & $0.97$ & $0.97$ \\
\textsc{Sk} & $0.96$ & $0.96$ & $0.97$  & $0.97$ & $0.91$ & $1.0$  & $0.97$ & $0.82$ & $0.99$ \\
\cline{1-10}
\textsc{IC} & $0.8$  & $0.84$ & $0.85$  & $0.84$ & $0.84$ & $0.85$ & $0.83$ & $0.84$ & $0.85$\\
\textsc{Ml} & $0.83$ & $0.88$ & $0.88$  & $0.9$  & $0.88$ & $0.92$ & $0.90$ & $0.89$ & $0.94$ \\
\textsc{Ur} & $0.8$  & $0.85$ & $0.86$  & $0.87$ & $0.85$ & $0.9$  & $0.87$ & $0.87$ & $0.89$ \\
\cline{1-10}
\end{tabular}

\caption{The average quality of the  algorithms compared with the simple lower bound ({$C/C_{\ideal}$}) for the small instances of data and for {$K=3$}.\smallskip}

\label{table:qualityAlgs3}
\end{center}
\vspace{-0.5cm}
\end{table}

\begin{table}[t]
\begin{center}
\begin{tabular}{|c|c|c|c|c|c||c|c|c|c|}
\cline{2-10}
\multicolumn{1}{c|}{} & \multicolumn{5}{|c||}{Monroe} & \multicolumn{4}{|c|}{CC} \\
\cline{2-10}
\multicolumn{1}{c|}{} & A & B & C & GM & R    & C & GM & P & R\\
\cline{1-10}
S1 & $0.91$ & $0.96$ & $0.96$  & $0.95$ & $0.95$ & $0.98$ & $0.98$ & $0.96$ & $0.98$ \\
S2 & $0.88$ & $0.93$ & $0.93$  & $0.93$ & $0.93$ & $0.98$ & $0.98$ & $0.96$ & $0.98$ \\
Mv & $0.85$ & $0.89$ & $0.89$  & $0.89$ & $0.88$ & $0.99$ & $0.99$ & $0.97$ & $0.99$ \\
Cr & $0.95$ & $0.98$ & $0.99$  & $0.99$ & $0.98$ & $1.0$  & $1.0$  & $1.0$  & $1.0$ \\
Sk & $0.91$ & $0.92$ & $0.92$  & $0.92$ & $0.81$ & $1.0$  & $1.0$  & $0.91$ & $\approx 1.0$ \\
\cline{1-10}
IC & $0.91$ & $0.95$ & $0.95$  & $0.94$ & $0.95$ & $0.96$ & $0.96$ & $0.95$ & $0.95$\\
Ml & $0.89$ & $0.94$ & $0.94$  & $0.94$ & $0.93$ & $0.97$ & $0.98$ & $0.95$ & $0.98$ \\
Ur & $0.91$ & $0.95$ & $0.95$  & $0.94$ & $0.95$ & $0.98$ & $0.98$ & $0.94$ & $0.97$ \\
\cline{1-10}

\end{tabular}

\caption{The average quality of the  algorithms compared with the simple lower bound ({$C/C_{\ideal}$}) for the small instances of data and for $K=6$.}

\label{table:qualityAlgs4}
\end{center}
\end{table}

\subsection{Evaluation on Small Instances}

We now present the results of our experiments on small
elections. 
%
%
%
For each data set, we generated elections with the number of agents
$n=100$ ($n=9$ for data set \textsc{Sk} because there are only $9$
voters there) and with the number of alternatives $m=10$ ($m=6$ for
data set \textsc{Cr} because there are only $6$ alternatives there)
using the method described in Section~\ref{sec:real-data} for the
real-life data sets, and in the natural obvious way for synthetic
data.  For each algorithm and for each data set we ran $500$
experiments on different instances for $K=3$ (for the \textsc{Cr}
data set we used $K=2$) and $500$ experiments for $K=6$ (for
\textsc{Cr} we set $K=4$). For Algorithms $C$ we set the
parameter $d = 15$. The results (average fractions ${C}/{C_\opt}$ and
${C}/{C_\ideal}$) for $K=3$ are given in
Tables~\ref{table:qualityAlgs1}~and~\ref{table:qualityAlgs3};
the results for $K=6$ are given in
Tables~\ref{table:qualityAlgs2}~and~\ref{table:qualityAlgs4}
(they are almost identical as for $K=3$). For each experiment in this
section we also computed the standard deviation; it was always 
on the order of $0.01$.
The results lead to the following conclusions:

\setcounter{Lcount}{0}
\begin{list}{\arabic{Lcount}.}{\setlength{\leftmargin}{8pt} \setlength{\labelwidth}{0pt}}
        \setlength{\itemsep}{2pt}
\addtocounter{Lcount}{1}
\item Even Algorithm~A obtains very good results, but nonetheless
  Algorithms~B and C improve noticeably upon Algorithm~A. In
  particular, Algorithm~C (for $d=15$) obtains the highest
  satisfaction on all data sets and in almost all cases was able to
  find an optimal solution.
\addtocounter{Lcount}{1}
\item Algorithm~R gives slightly worse solutions than 
  Algorithm~C. 
\addtocounter{Lcount}{1}
\item The quality of the algorithms does not depend on the data set
  used for verification (the only exception is Algorithm~R for
  Monroe's system on data set \textsc{Sk}; however \textsc{Sk} has
  only 9 voters so it can be viewed as a border case).
\end{list}

\subsection{Evaluation on Larger Instances}

For experiments on larger instances we needed data sets with at
least $n = 10 000$ agents. Thus we used the NetFlix data set and
synthetic data. (Additionally, we run the subset of experiments (for
$n \leq 5000$) also for the \textsc{S2} data set.) For Monroe's rule we
present results for Algorithm~A, Algorithm~C, and Algorithm~R, and for
Chamberlin-Courant's rule we present results for Algorithm~C and
Algorithm~R. We limit the set of algorithms for the sake of the clarity of
the presentation. For Monroe we chose Algorithm~A because it is the
simplest and the fastest one, Algorithm~C because it is the best
generalization of Algorithm~A that we were able to run in reasonable
time, and Algorithm~R to compare a randomized algorithm to
deterministic ones. For Chamberlin-Courant's rule we chose Algorithm~C
because it is, intuitively, the best one, and we chose Algorithm~R for
the same reason as in the case of Monroe.  Further, we present results
for the NetFlix data set and for the urn model only. We chose these data sets because the
urn model results turned out to be the worst ones among the synthetic
data sets, and the NetFlix data set is our only large real-life data
set.

First, for each data set and for each algorithm we fixed the value of
$m$ and $K$ and for each $n$ ranging from $1000$ to $10000$ with the
step of $1000$ we run $50$ experiments. We repeated this procedure for
4 different combinations of $m$ and $K$: ($m = 10$, $K = 3$), ($m =
10$, $K = 6$), ($m = 100$, $K = 30$) and ($m = 100$, $K = 60$). We
measured the statistical correlation between the number of voters and
the quality of the algorithms $C/C_{\ideal}$. The ANOVA test in most
cases showed that there is no such correlation. The only exception was
\textsc{S2} data set, for which we obtained an almost negligible
correlation. For example, for ($m = 10, K = 3$) Algorithm $C$ under
data set \textsc{S2} for Monroe's system for $n = 5000$ gave
$C/C_{\ideal} = 0.88$, while for $n = 100$ (in the previous section)
we got $C/C_{\ideal} = 0.89$. Thus we conclude that in practice the
number of agents has almost no influence on the quality of the results
provided by our algorithms.

Next, we fixed the number of voters $n = 1000$ and the ratio $K/m =
0.3$, and for each $m$ ranging from $30$ to $300$ with the step of
$30$ (naturally, as $m$ changed, so did $K$ to maintain the ratio
$K/m$), we run 50 experiments. We repeated this procedure for $K/m =
0.6$. The relation between $m$ and $C/C_{\ideal}$ for \textsc{Mv} and
\textsc{Ur}, under both Monroe's rule and Chamberlin-Courant's rule,
is given in Figures~\ref{fig:changing_m_monroe}
and~\ref{fig:changing_m_cc} (the results for $K/m = 0.6$ look similar).

Finally, we fixed $n = 1000$ and $m = 100$, and for each $K/m$ ranging
from $0.1$ and $0.5$ with the step of $0.1$ we run $50$ experiments. The
relation between the ratio $K/m$ and the quality $C/C_{\ideal}$ is
presented in Figures~\ref{fig:changing_km_monroe} and~~\ref{fig:changing_km_cc}. 

For the case of Chamberlin-Courant's system increasing the size of the
committee we elect improves agent satisfaction: Since there are no
constraints on the number of agents matched to a given alternative,
larger committees mean more opportunities to satisfy the agents.  For
Monroe, larger committees may lead to lower total satisfaction.  This
happens if many agents like a particular alternative a lot, but only
some of them can be matched to this alternative and others have to be
matched to their less-preferred ones.  Nonetheless, we see that
Algorithm~C achieves $C/C_{\ideal} = 0.925$ even for $K/m = 0.5$ for
the NetFlix data set.

Our conclusions from these experiments are the following.  For
Monroe's rule, even Algorithm~A achieves very good results. However,
Algorithm~C consistently achieves better ones (indeed, almost perfect
ones).  Randomized algorithms consistently do worse than our
deterministic ones.

\begin{figure}[t]
  \centering
  \includegraphics[width=0.75\textwidth]{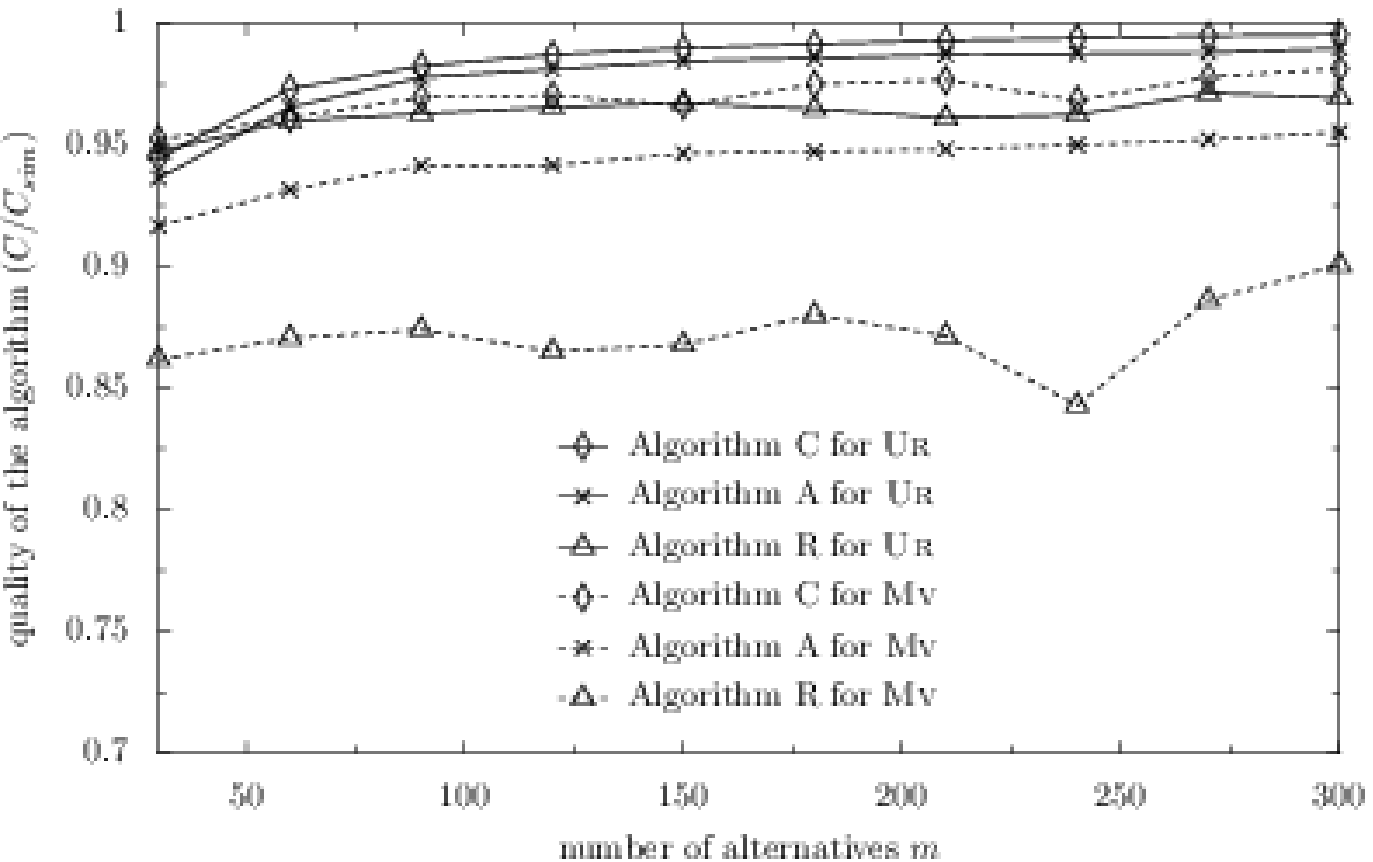}
  \caption{The relation between the number of alternatives
    {$m$} and the quality of the algorithms
    {$C/C_{\ideal}$} for the Monroe's system.}
  \label{fig:changing_m_monroe}
\end{figure}
\begin{figure}[t]
  \centering
  \includegraphics[width=0.75\textwidth]{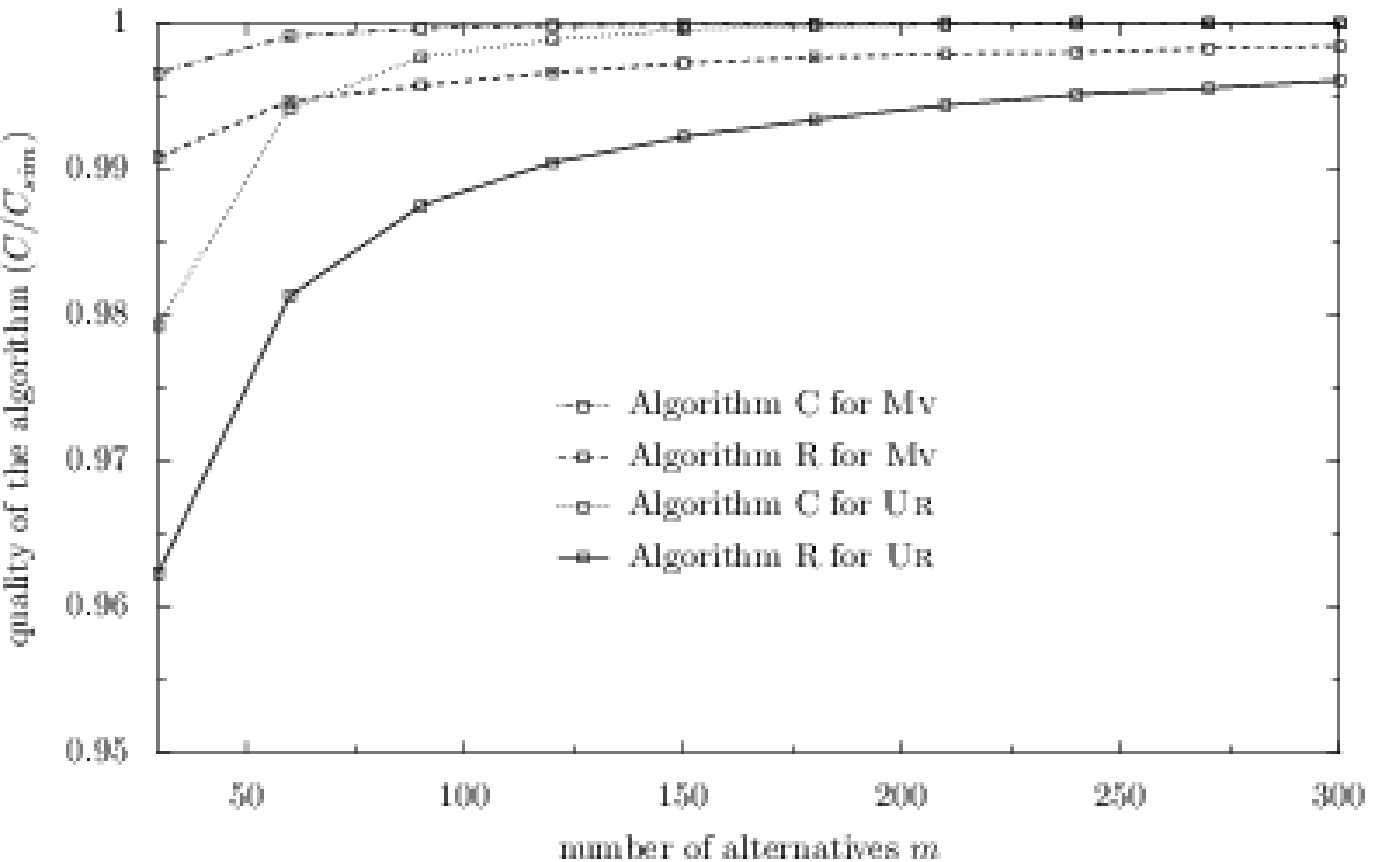}
  \caption{The relation between the number of alternatives
    {$m$} and the quality of the algorithms
    {$C/C_{\ideal}$} for the Chamberlin-Courant's system.}
  \label{fig:changing_m_cc}
\end{figure}

\begin{figure}[t]
  \centering
  \includegraphics[width=0.75\textwidth]{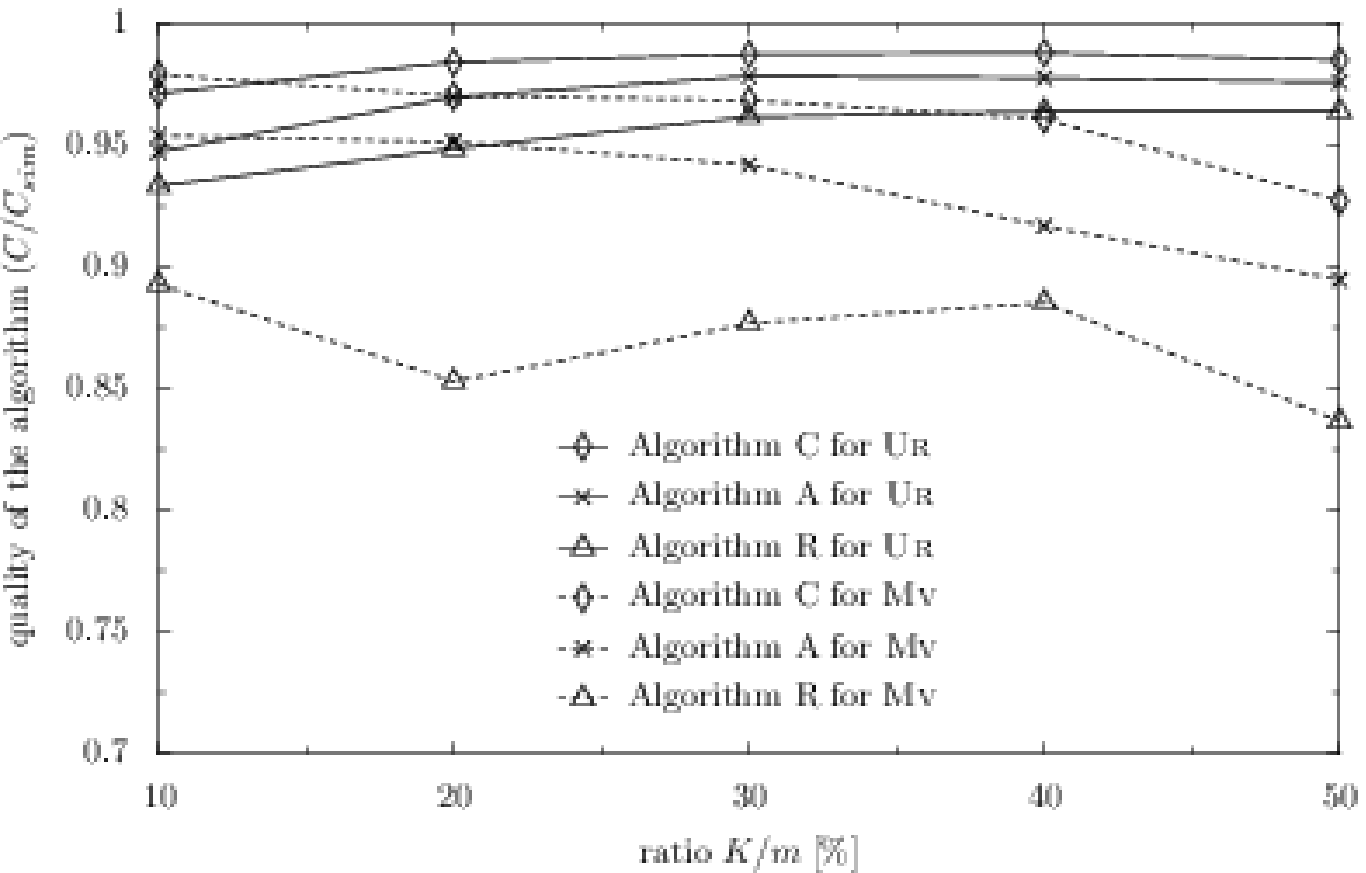}
  \caption{The relation between the ratio {$K/m$} and the
    quality of the algorithms {$C/C_{\ideal}$} for the Monroe's
    system.}
  \label{fig:changing_km_monroe}
\end{figure}
\begin{figure}[t]
  \centering
  \includegraphics[width=0.75\textwidth]{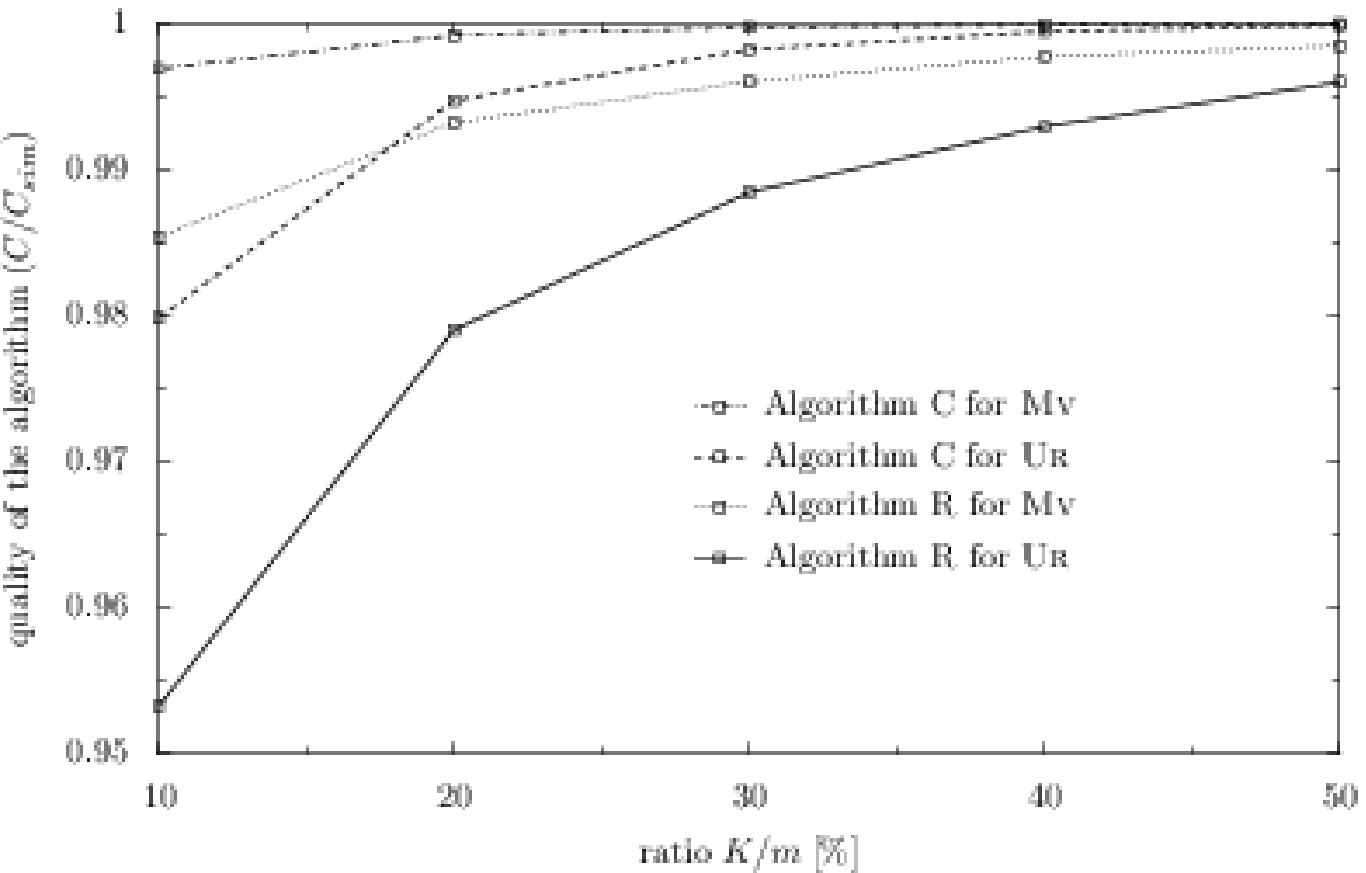}
  \caption{The relation between the ratio {$K/m$} and the
    quality of the algorithms {$C/C_{\ideal}$} for the
    Chamberlin-Courant's system.}
  \label{fig:changing_km_cc}
\end{figure}

\subsection{Running time}

In our final set of experiments we have measured running times of our
algorithms on the data set \textsc{Mv}. We have used a machine with
Intel Pentium Dual T2310 1.46GHz processor and 1.5GB of RAM.  In
Figure~\ref{fig:ilp_runtime} we show the running time of GLPK ILP
solver for Monroe's and for Chamberlin-Courant's rules. These running
times are already large for small instances and they are increasing
exponentially with the number of voters. For Monroe's rule, even for
$K=9, m = 30, n=100$ some of the experiments timed out after 1 hour,
and for $K=9, m = 30, n=200$ none of the experiments finished within
one day.  Thus we conclude that the real application of the ILP
algorithm is very limited. Example running times of the other
algorithms for some combinations of $n$, $m$, and $K$ are presented in
Table~\ref{table:runningTimes}.

\begin{figure}[t]
  \centering
  \includegraphics[width=0.77\textwidth]{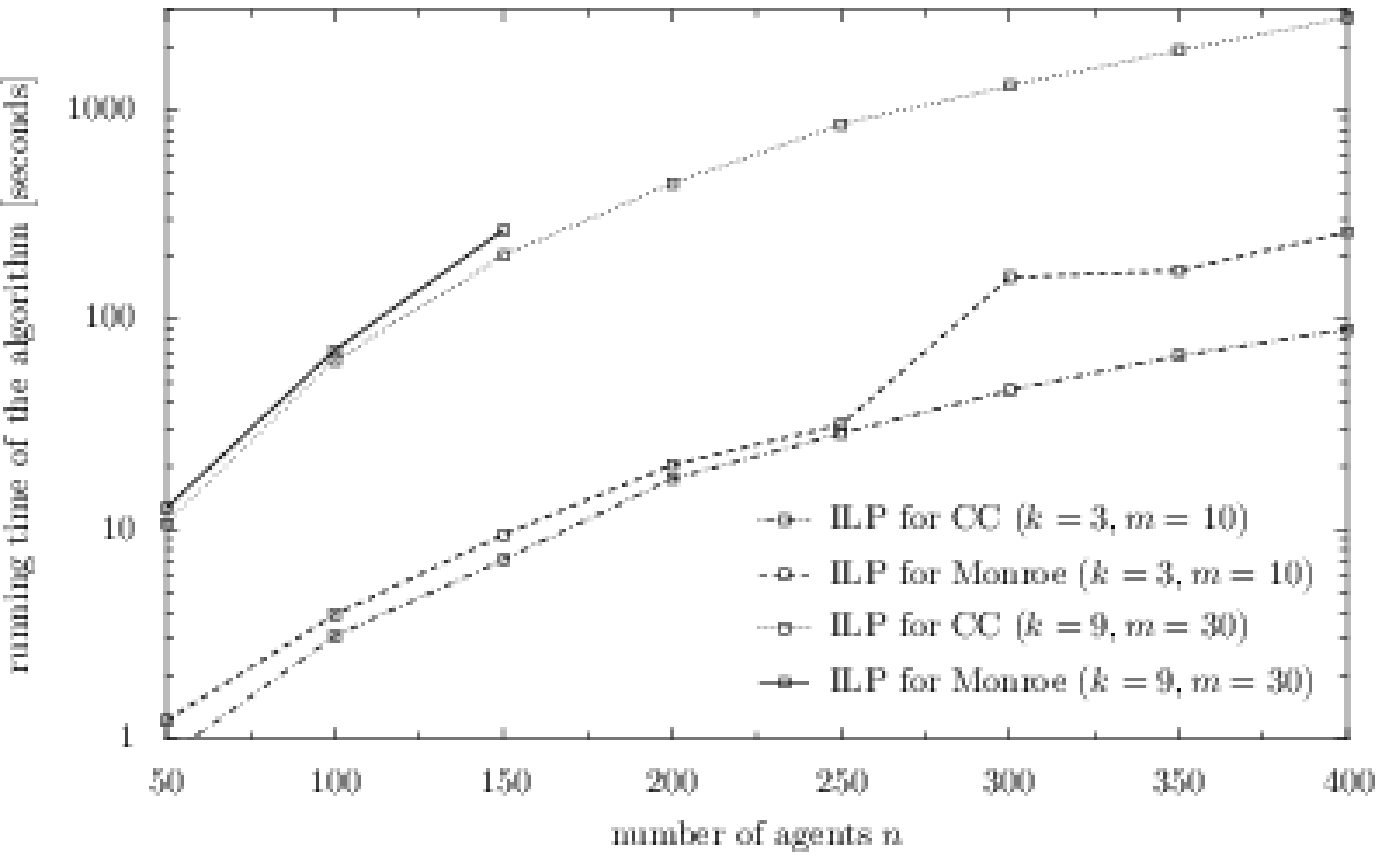}
  \caption{The running time of the standard ILP solver for the
    Monroe's and for the Chamberlin-Courant's systems.  For the
    Monroe's system, for {$K=9, m = 30$}, and for
    {$n \geq 200$} none of the single algorithm execution
    finished within 1 day.}
  \label{fig:ilp_runtime}
\end{figure}

\begin{table}[h!]
\begin{center}
\begin{tabular}{|c|c|c|c|c||c|c|c|}
\cline{3-8}
\multicolumn{2}{c|}{} & \multicolumn{3}{|c||}{$m=10$, $K=3$} & \multicolumn{3}{|c|}{$m=10$, $K=6$} \\
\cline{2-8}
\multicolumn{1}{c|}{} & $n=$ & 2000 & 6000 & 10000 & 2000 & 6000 & 10000\\
\cline{1-8}
\multirow{5}{*}{\begin{sideways} Monroe \end{sideways}}
& A & $0.01$ & $0.03$ & $0.05$  & $0.01$ & $0.04$ & $0.07$ \\
& B & $0.08$ & $0.9$ & $2.3$  & $0.2$ & $1.4$ & $3.6$ \\
& C & $1.1$ & $8$ & $22$  & $2.1$ & $16$ & $37$ \\
& GM & $0.8$ & $7.3$ & $20$ & $1.9$ & $13$ & $52$ \\
& R & $7.6$ & $50$ & $180$  & $6.5$ & $52$ & $140$ \\
\cline{1-8}
\multirow{4}{*}{\begin{sideways} CC \end{sideways}}
& C & $0.02$ & $0.07$ & $0.12$ & $0.05$ & $0.14$ & $0.26$ \\
& GM & $0.003$ & $0.009$ & $0.015$ & $0.003$ & $0.01$ & $0.018$ \\
& P & $0.009$ & $0.032$ & $0.05$ & $0.008$ & $0.02$ & $0.05$ \\
& R & $0.014$ & $0.04$ & $0.065$ & $0.02$ & $0.06$ & $0.11$ \\
\cline{1-8}
\multicolumn{2}{c}{} \\

\cline{3-8}
\multicolumn{2}{c|}{} & \multicolumn{3}{|c||}{$m=100$, $K=30$} & \multicolumn{3}{|c|}{$m=100$, $K=60$} \\
\cline{2-8}
\multicolumn{1}{c|}{} & $n=$ & 2000 & 6000 & 10000 & 2000 & 6000 & 10000\\
\cline{1-8}
\multirow{5}{*}{\begin{sideways} Monroe \end{sideways}}
& A & $0.5$ & $1.6$ & $2.8$  & $0.9$ & $2.8$ & $4.9$ \\
& B & $0.8$ & $4$ & $9.5$  & $1.7$ & $8$ & $18$ \\
& C & $38$ & $140$ & $299$  & $64$ & $221$ & $419$ \\
& GM & $343$ & $2172$ & $5313$ & $929$ & $5107$ & $13420$ \\
& R & $41$ & $329$ & $830$  & $88$ & $608$ & $1661$ \\
\cline{1-8}
\multirow{4}{*}{\begin{sideways} CC \end{sideways}}
& C & $4.3$ & $11$ & $19$ & $7.5$ & $19$ & $31$ \\
& GM & $0.06$ & $0.2$ & $0.4$ & $0.09$ & $0.3$ & $0.7$ \\
& P & $0.03$ & $0.1$ & $0.26$ & $0.03$ & $0.1$ & $0.2$ \\
& R & $0.06$ & $0.24$ & $0.45$ & $0.1$ & $0.4$ & $0.8$ \\
\cline{1-8}
\end{tabular}
\end{center}

\caption{Example running times of the algorithms [in seconds].}
\label{table:runningTimes}
\end{table}

\section{Conclusions}\label{sec:conclusions}
We have provided experimental evaluation of a number of algorithms
(both known ones and their extensions) for computing the winners under
Monroe's rule and under Chamberlin-Courant's rule. While finding
winners under these rules is
$\np$-hard~\cite{pro-ros-zoh:j:proportional-representation,budgetSocialChoice,fullyProportionalRepr},
it turned out that in practice we can obtain very high quality
solutions using simple algorithms. Indeed, both for Monroe's rule and
for Chamberlin-Courant's rule we recommend using Algorithm~C (or
Algorithm~A on very large Monroe elections).
We believe that our results mean that (approximations of) Monroe's and
Chamberlin-Courant's rules can be used in practice.

\smallskip
\noindent\textbf{Acknowledgements} The
authors were supported in part by AGH Univ. grant 11.11.120.865, by
the Foundation for Polish Science's Homing/Powroty program, by
Poland's National Science Center grant DEC-2011/03/B/ST6/01393, and by
EU's Human Capital Program "National PhD Programme in Mathematical
Sciences" carried out at the University of Warsaw.

\bibliographystyle{abbrv}
\bibliography{main,grypiotr2006}

\begin{thebibliography}{10}

\bibitem{bpublicchoice85}
S.~Berg.
\newblock {Paradox of voting under an urn model: The effect of homogeneity}.
\newblock {\em Public Choice}, 47:377--387, 1985.

\bibitem{fullyProportionalRepr}
N.~Betzler, A.~Slinko, and J.~Uhlmann.
\newblock On the computation of fully proportional representation.
\newblock Technical report, U. of Auckland, November 2011.

\bibitem{cha-cou:j:cc}
B.~Chamberlin and P.~Courant.
\newblock Representative deliberations and representative decisions:
  Proportional representation and the {B}orda rule.
\newblock {\em American Political Science Review}, 77(3):718--733, 1983.

\bibitem{Kamishima:Nantonac}
T.~Kamishima.
\newblock Nantonac collaborative filtering: recommendation based on order
  responses.
\newblock In {\em Proceedings of KDD-03}, pages 583--588, 2003.

\bibitem{budgetSocialChoice}
T.~Lu and C.~Boutilier.
\newblock Budgeted social choice: {F}rom consensus to personalized decision
  making.
\newblock In {\em Proceedings of IJCAI-2011}, pages 280--286, 2011.

\bibitem{mallowImplementation2011}
T.~Lu and C.~Boutilier.
\newblock Learning {M}allows models with pairwise preferences.
\newblock In {\em Proceedings of ICML-11}, pages 145--152, June 2011.

\bibitem{mallowModel}
C.~L. Mallows.
\newblock Non-null ranking models. i.
\newblock {\em Biometrika}, 44(1-2):114--130, June 1957.

\bibitem{Mattei:Netflix}
N.~Mattei, J.~Forshee, and J.~Goldsmith.
\newblock An empirical study of voting rules and manipulation with large
  datasets.
\newblock In {\em COMSOC}, 2012.

\bibitem{mei-pro-ros-zoh:j:multiwinner}
R.~Meir, A.~Procaccia, J.~Rosenschein, and A.~Zohar.
\newblock The complexity of strategic behavior in multi-winner elections.
\newblock {\em JAIR}, 33:149--178, 2008.

\bibitem{mon:j:monroe}
B.~Monroe.
\newblock Fully proportional representation.
\newblock {\em American Political Science Review}, 89(4):925--940, 1995.

\bibitem{submodular}
G.~Nemhauser, L.~Wolsey, and M.~Fisher.
\newblock An analysis of approximations for maximizing submodular set
  functions.
\newblock {\em Mathematical Programming}, 14(1):265--294, 1978.

\bibitem{ore:p:cc}
J.~Oren.
\newblock Personal communication, 2012.

\bibitem{potthoff-brams}
R.~Potthoff and S.~Brams.
\newblock Proportional representation: {B}roadening the options.
\newblock {\em Journal of Theoretical Politics}, 10(2):147--178, 1998.

\bibitem{pro-ros-zoh:j:proportional-representation}
A.~Procaccia, J.~Rosenschein, and A.~Zohar.
\newblock On the complexity of achieving proportional representation.
\newblock {\em Social Choice and Welfare}, 30(3):353--362, 2008.

\bibitem{sko-fal-sli:w:multiwinner}
P.~Skowron, P.~Faliszewski, and A.~Slinko.
\newblock Fully proportional representation as resource allocation:
  Approximability results.
\newblock Technical Report arXiv:0809.4484~[cs.GT], arXiv.org, Aug. 2012.

\bibitem{Walsh11}
T.~Walsh.
\newblock Where are the hard manipulation problems?
\newblock {\em JAIR}, 42:1--29, 2011.

\end{thebibliography}

\end{document}